\newtheorem{theorem}{Theorem}
\newtheorem{proposition}[theorem]{Proposition}
\newenvironment{proof}[1][Proof]{\noindent\textbf{#1.} }{\ \rule{0.5em}{0.5em}}
\begin{document}

\title{ A solution of the Gaussian optimizer conjecture}
\author{V. Giovannetti$^{1}$, A. S. Holevo$^{2}$, R. Garc{\'i}a-Patr\'on$
^{3,4}$ \\
\\
{\small $^{1}$NEST, Scuola Normale Superiore and Istituto Nanoscienze-CNR, }
\\
{\small I-56127 Pisa, Italy,} \\
{\small $^2$Steklov Mathematical Institute, RAS, Moscow, Russia}\\
{\small $^{3}$ Quantum Information and Communication, Ecole Polytechnique de
Bruxelles,} \\
{\small CP 165, Universite Libre de Bruxelles, 1050 Bruxelles, Belgium} \\
{\small $^{4}$Max-Planck Institut f\"{u}r Quantenoptik, }\\
{\small Hans-Kopfermann Str. 1, D-85748 Garching, Germany} }
\date{}
\maketitle

\begin{abstract}
The long-standing conjectures of the optimality of Gaussian inputs for
Gaussian channel and Gaussian additivity are solved for a broad class of
covariant or contravariant Bosonic Gaussian channels (which includes in
particular thermal, additive classical noise, and amplifier channels)  restricting to the class of states with finite second moments. We
show that the vacuum is the input state which minimizes the entropy at the
output of such channels. This allows us to show also that the classical
capacity of these channels (under the input energy constraint) is additive
and is achieved by Gaussian encodings.
\end{abstract}

\section{Introduction}

The capacity of a communication channel is the maximum rate at which
information (measured in bits per channel use) can be transmitted from
sender to receiver with asymptotically vanishing error~\cite{COVER}. It
provides an operationally defined measure of the communication efficiency of
the channel by setting the ultimate limit at which messages can be
transferred reliably. An explicit formula for this quantity in terms of an
entropic functional of the conditional probability distribution that defines
the noise model is given by the \textit{Shannon noisy-channel coding theorem}
~\cite{Shannon}. For channels with additive Gaussian noise it amounts to the
famous formula
\begin{equation}
C=\frac{1}{2}\log \left( 1+E/N\right) ,  \label{shan}
\end{equation}
where $E/N$ is signal-to-noise ratio.

In the context of quantum information theory~\cite{BENSHOR,HG},
communication channels are described by linear, completely positive, trace
preserving (CPTP) maps $\Phi $ which transform input density matrices to
their output counterparts. For these models the analog of the Shannon noisy
coding theorem was obtained in Refs.~\cite{HOLEVO98,SCHWEST} for
finite-dimensional channels. It says that that the associated (classical) capacity is
equal to
\begin{equation}
C(\Phi )=\lim_{n\rightarrow \infty }\ \frac{1}{n}C_{\chi }(\Phi ^{\otimes
n}),  \label{CfromC1}
\end{equation}
where $\Phi ^{\otimes n}$ is the map describing $n$ channel uses (memoryless
noise model), while $C_{\chi }$ is the $\chi $-capacity defined by the
expression
\begin{equation}
C_{\chi }(\Psi )=\max_{\mathcal{E}}\left\{ S(\Psi \lbrack \sum_{j}p_{j}\rho
_{j}])-\sum_{j}p_{j}S(\Psi \lbrack \rho _{j}])\right\} ,
\label{oneshotunassist}
\end{equation}
where the maximization is performed over the set of (possibly constrained)
input ensembles $\mathcal{E}=\{p_{j};\rho _{j}\}$ ($p_{j}$ being
probabilities, while $\rho _{j}$ being density matrices) and $S(\rho )= -
\mbox{Tr}\rho \log \rho $ is the von Neumann entropy.

A special class of maps which play a fundamental role in quantum information
theory constitute the so called Bosonic Gaussian Channels (BGCs) \cite{HOWE}.
Among other noise models they describe thermal, attenuation, and
amplification processes for all those communication setups where messages
are encoded into the modes of the electromagnetic field (say optical
fibers), i.e. the most common quantum communication architectures~\cite
{CAVES}. Computing the capacity of these channels is hence an important
problem which has profound implications both from theoretical and technical
point of view. Since BGCs act in infinite dimensions, the corresponding
generalization of the coding theorem (\ref{CfromC1}), (\ref{oneshotunassist}),
allowing for input constraint and continuous state ensembles is required, see \cite
{h2,HOSHI} and also \cite{h}, Ch.11. A long-standing conjecture, first
proposed in~\cite{HOWE}, is that for these maps and energy constraints
Gaussian encodings should provide the optimal communication rates, allowing
to restrict the maximization of Eq.~(\ref{oneshotunassist}) over the set of
Gaussian ensembles, a task which can be performed analytically (\textit{
optimal Gaussian ensemble conjecture}); moreover $C_{\chi }(\Phi )$ should
be additive for BGC (\textit{Gaussian additivity conjecture}) ensuring that
no limit in (\ref{CfromC1}) is necessary so that $C(\Phi )=C_{\chi }(\Phi )$.
It turns out that in some important situations such statements can be
reduced to similar conjectures on the output entropy of the channel~$\Phi $~
\cite{conj1,conj2,conj3,gp}. In this formulation, Gaussian input states
are supposed to provide the minimum for this quantity~(\textit{minimum
output entropy conjecture}). Despite a number of indirect evidences of
correctness (see e.g. Refs.~\cite{konig1,konig2,natphot}), up to date both
the optimal Gaussian ensemble and the minimum output entropy conjectures
remained open, except for a special class of quantum-limited attenuator (or
lossy) channels~\cite{gio}. Extending these results to a broader set of BGCs
proved to be one of the major challenges in quantum information theory. In
the present paper we give a solution to these problems by showing that
vacuum input minimizes the entropy at the output of any multimode
gauge-covariant or contravariant BGC, while the capacity constrained with an
oscillator energy operator is attained on the corresponding Gaussian
ensemble of coherent states. In our solution we restrict all
optimizations to the class of states with finite second moments which natural when dealing with capacities
of channels with energy-constrained inputs. As for the Gaussian minimal output entropy problem, this restriction can be relaxed but we postpone the solution to the future
work \cite{ANDREA}.

The manuscript is organized as follows. In Sec.~\ref{sec1} we introduce the
notation and define important classes of covariant quantum-limited
attenuators, quantum-limited amplifiers, and quantum-limited contravariant
channels. Furthermore we show (Proposition \ref{prop1}) that any Gaussian
gauge-covariant channel can be expressed as a concatenation of a
quantum-limited attenuator followed (in the Schr\"{o}dinger picture) by a
quantum-limited amplifier. In Sec.~\ref{sec:complandadd} we recall the
additivity properties of entanglement-breaking channels, and observe that
the quantum-limited contravariant channel is entanglement-breaking
(Proposition \ref{prop3}) and shares the additivity properties with the
complementary quantum-limited covariant amplifier \ (Proposition \ref{prop2}
). In Sec. \ref{sec:reduction} we show that proving the minimal output
entropy conjecture for an arbitrary covariant or contravariant channel
reduces to proving it for a single-mode quantum-limited amplifier
(Proposition \ref{prop4}). Also we show that proving the last fact allows
one to compute the classical capacity of an arbitrary covariant
(contravariant) channel under the energy constraint. In Sec.~\ref{sec:new}
we derive key identities that lead to the proof of the conjecture presented
in Sec.~\ref{sec:6}. Specifically, Sec.~\ref{sec:new} is devoted to
characterization of the output states of the quantum-limited contravariant
channel. In Proposition~\ref{prop5} we give an explicit measure-reprepare
decomposition of these maps and present (Proposition~\ref{prop6}) a
covariant amplifier channel whose output states have the same spectrum as
the original channel. Building up on these elements, in Sec.~\ref{sec:6} we
establish an important decomposition (Proposition~\ref{propo10}) from which
one can finally deduce that the minimal output entropy of a single-mode
covariant amplifier is indeed achieved by the vacuum. In view of the results
of Sec.~\ref{sec:reduction}, this proves both the Gaussian encoding and the
minimal output entropy conjectures for the whole class of covariant and
contravariant channels. The paper is concluded with Sec.~\ref{sec:conc}
where we present specific examples which can be considered as counterparts
of the Shannon formula (\ref{shan}) for quantum channels, and discuss further
implications of our findings.

\section{Gaussian gauge-covariant and contravariant channels}

\label{sec1}

The scenery for considering gauge-covariant states and channels is $s-$
dimensional complex Hilbert space $\mathbf{Z}$ which can be considered as $%
2s-$dimensional real space equipped with the symplectic form $\Delta
(z,z^{\prime })=\Im z^{\ast }z^{\prime }.$ To be specific, we consider
vectors in $\mathbf{Z}$ as $s-$dimensional complex column vectors, in which
case (complex-linear) operators in $\mathbf{Z}$ are represented by complex $
s\times s-$matrices, and $^{\ast }$ denotes Hermitian conjugation. The gauge
group acts in $\mathbf{Z}$ as multiplication by $e^{i\phi },$ where $\phi $
is real number called phase. The Weyl quantization is described by
displacement operators $D(z)$ acting irreducibly in the representation space
$\mathcal{H}$ and satisfying the relation
\begin{equation}
D(z)D(z^{\prime })=\exp \left( -i\Im \bar{z} z^{\prime }\right)
D(z+z^{\prime }). \label{defDD}
\end{equation}
Introducing the annihilation (resp. creation)
operators of the system $a_{j}$ and $a_{j}^{\dag }$  which satisfy the commutation relations $\left[
a_{j\,,}a_{k}^{\dag }\right] =\delta _{jk}I$, we recall that $D(z)$ can be
expressed as
\begin{equation}  \label{displacement}
D(z)=\exp[ \mathbf{a}^{\dag }z - z^*\mathbf{a}]=\exp \sum_{j=1}^{s}\left(
z_{j}a_{j}^{\dag }-\bar{z}_{j}a_{j}\right),
\end{equation}%
where $\mathbf{a}=\left[ a_{1},\dots ,a_{s}\right] ^{t}$ and $\mathbf{a}%
^{\dag }=\left[ a_{1}^{\dag },\dots ,a_{s}^{\dag }\right]$ are respectively
column and row vectors. Next let $\Lambda $ be the antiunitary operator of
complex conjugation in $ \mathbf{Z}$ which anticommutes with multiplication
by $i$ and satisfies $ \Lambda ^{\ast }=\Lambda ,\,\Lambda ^{2}=I.$ The
associated transposition map $\mathrm{T}$ acting on operators in $\mathcal{H}
$ can then be defined by the relation $\mathrm{T}[D(z)]=D(-\Lambda z),\,z\in
\mathbf{Z}$, or equivalently $\mathrm{T} [a_{j\,}]=a_{j}^{\dag }$ and $%
\mathrm{T}[\mathbf{a}]=\left[ a_{1}^{\dag },\dots
,a_{s}^{\dag }\right] ^{t}$ (notice that the last is a column vector
different from $\mathbf{a}^{\dag}$).

The gauge group has the unitary representation $\phi \rightarrow U_{\phi
}=e^{i\phi N}$ in $\mathcal{H}$ where $N=\sum_{j=1}^{s}a_{j}^{\dag }a_{j}$
is the total number operator. A state $\rho $ is then said to be
gauge-invariant if it commutes with all $U_{\phi }$. or, equivalently, if
its (symmetrically ordered) characteristic function $\mathcal{F}(z)=\text{Tr}%
\rho D(z)$~\cite{walls} is invariant under the action of the gauge group.
In particular Gaussian gauge-invariant states are described by the property
\begin{equation}\label{gausstate}
\mathrm{Tr}\rho D(z)=\exp \left( -z^{\ast }\alpha z\right) ,
\end{equation}
where $\alpha$ is a complex-linear covariance operator satisfying $\alpha
\geq I/2$. The vacuum state $\rho _{vac}=|0\rangle \langle 0|$ is an element
of this set with $\alpha =I/2$.

A channel $\Phi $ with the input space $\mathcal{H}_{A}$ and the output
space $\mathcal{H}_{B}$ satisfying
\begin{equation*}
\Phi \lbrack e^{i\phi N_{A}}\rho e^{-i\phi N_{A}}]=e^{\pm i\phi N_{B}}\Phi
\lbrack \rho ]e^{\mp i\phi N_{B}},
\end{equation*}%
is called \emph{gauge-covariant}\ (\emph{gauge-contravariant}). We denote by
$s_{A}=\dim \mathbf{Z}_{A}$, $s_{B}=\dim \mathbf{Z}_{B}$ the numbers of
modes of the input and output of the channel.

In the Heisenberg representation, a multimode bosonic Gaussian
gauge-covariant channel $\Phi $ \cite{htw,h} is described by the action of
its adjoint $\Phi ^{\ast }$ onto displacement operators as follows:
\begin{equation}
\Phi ^{\ast }[D_{B}(z)]=D_{A}(K^{\ast }z)\exp \left( -z^{\ast }\mu z\right),
\label{defprima}
\end{equation}%
where $K$ is complex-linear operator from $\mathbf{Z}_{A}$ to $\mathbf{Z}%
_{B} $ and $\mu $ is complex Hermitian operator in $\mathbf{Z}_{B}$
satisfying the inequality (cf. \cite{htw}, eq. (24))
\begin{equation}
\mu \geq \pm \frac{1}{2}\left( I-KK^{\ast }\right) .  \label{ineq}
\end{equation}
Notice, that if (and only if) the operators $K$ and $\mu $ can be
simultaneously diagonalized in an orthonormal basis, then the channel
decomposes into tensor product of one-mode channels (cf. \cite{htw}). In
this case we call the channel \emph{diagonalizable}. The gauge-covariant
channel is \emph{quantum-limited} if $\mu $ is a minimal solution of this
inequality.

Special cases of the maps~(\ref{defprima}) are provided by the attenuators
and amplifier channels, characterized by matrix $K$ fulfilling the
inequalities, $KK^{\ast }\leq I$ and $KK^{\ast }\geq I$ respectively. We are
particularly interested in \emph{quantum-limited attenuator} which
corresponds to
\begin{equation}
KK^{\ast }\leq I,\qquad \qquad \mu =\frac{1}{2}\left( I-KK^{\ast }\right),
\label{mindef1}
\end{equation}
and \emph{quantum-limited amplifier}
\begin{equation}
KK^{\ast }\geq I,\qquad \qquad \mu =\frac{1}{2}\left( KK^{\ast }-I\right).
\label{mindef2}
\end{equation}
These channels are diagonalizable: by using singular value decomposition $%
K=V_{B}K_{c}V_{A},$ where $V_{A},V_{B}$ are unitaries and $K_{c}$ is
(rectangular) diagonal matrix with nonnegative values on the diagonal, we
have $KK^{\ast }=V_{B}K_{c}K_{c}{}^{\ast }V_{B}^{\ast },$ and%
\begin{equation}
\Phi \lbrack \rho ]=U_{B}\Phi _{c}[U_{A}^{\ast }\rho U_{A}]U_{B}^{\ast },
\label{unieq}
\end{equation}
where $\Phi _{c}$ is a tensor product of one-mode a (quantum limited)
channels defined by the matrix $K_c$ and where $U_{A}$, $U_{B}$ are passive
canonical unitary transformations acting on $\mathcal{H}_{A}$ and $\mathcal{H%
}_{B}$ respectively, such that
\begin{equation*}
U_{B}^{\ast }\mathbf{a}U_{B}=V_{A}\mathbf{a},\qquad \qquad U_{A}^{\ast }%
\mathbf{a} U_{A}=V_{B}{}^{t}\mathbf{a},
\end{equation*}
with $\mathbf{a}$
 being the column
vector formed by the annihilation operators introduced in Eq.~(\ref{defDD})  (notice that in particular this
implies $U_{A}|0\rangle =|0\rangle ,U_{B}|0\rangle =|0\rangle$).

A multimode \emph{\ Gaussian gauge-contravariant channel} involving
\textquotedblleft phase inversion\textquotedblright\ acts as
\begin{equation}
\Phi ^{\ast }[D_{B}(z)]=D_{A}(\Lambda K^{\ast }z)\exp \left( -z^{\ast }\mu
z\right) =D_{A}(K^{t}\bar{z})\exp \left( -z^{\ast }\mu z\right) ,
\label{CONTRAV}
\end{equation}%
and $\mu $ is the Hermitian operator satisfying
\begin{equation}
\mu \geq \frac{1}{2}\left( I+KK^{\ast }\right) ,  \label{ineq2}
\end{equation}%
(in the second identity of Eq.~(\ref{CONTRAV}) we used the fact that for all
$z$ one has $\Lambda K^{\ast }z=K^{t}\bar{z}$ with $K^{t}$ being the
transpose of the matrix $K$ and $\bar{z}$ being the column vector obtained
by taking the complex conjugate of the elements of $z$). If the operators $%
K\Lambda $ and $\mu $ can be simultaneously diagonalized in an orthonormal
basis, then this channel is equivalent in the sense of (\ref{unieq}) to tensor
product of one-mode channels and is called \emph{diagonalizable}. These maps
are quantum-limited if
\begin{equation}\label{mindef5}
\mu =\frac{1}{2}\left( I+KK^{\ast }\right)
\end{equation}
and these are diagonalizable similarly to quantum-limited amplifiers.

All the examples of quantum-limited BGSs introduced above can also be
expressed in terms of input-output equations for the column vectors of
annihilation operators as
\begin{eqnarray*}
\mathbf{a}_{B} &=&K\mathbf{a}_{A}+\sqrt{I-KK^{\ast }}\; \mathbf{a}_{E}, \quad 
\mbox{{\it (attenuator)},} \\
\mathbf{a}_{B} &=&K\mathbf{a}_{A}+\sqrt{KK^{\ast }-I}\;  \mbox{T}[\mathbf{a}_{E}],
\quad \mbox{{\it (amplifier)},} \\
\mathbf{a}_{B} &=&- K\; \mbox{T}[\mathbf{a}_{A}]+\sqrt{I+KK^{\ast }}\; \mathbf{a}_{E},
\quad \mbox{{\it (gauge-contravariant)},}
\end{eqnarray*}
where the environment modes associated with $\mathbf{a}_{E}$ are in the
vacuum state.

The concatenation $\Phi =\Phi _{2}\circ \Phi _{1}$ of two channels $\Phi _{1}
$ and $ \Phi _{2}$ obeys the rule
\begin{eqnarray}
K &=&K_{2}K_{1},\quad  \label{c1} \\
\mu &=&K_{2}\mu _{1}K_{2}^{\ast }+\mu _{2}.  \label{c2}
\end{eqnarray}
The following proposition generalizes to many modes the decomposition of
one-mode channels the usefulness of which was emphasized and exploited in the
paper \cite{gp} (see also \cite{cgh} on concatenations of one-mode channels):

\begin{proposition}
\label{prop1} Any bosonic Gaussian gauge-covariant channel $\Phi $ is a
concatenation of quantum-limited attenuator $\Phi _{1}$ and (diagonalizable)
quantum-limited amplifier $\Phi _{2}$.

Any bosonic Gaussian gauge-contravariant channel $\Phi $ is a concatenation
of quantum-limited attenuator $\Phi _{1}$ and (diagonalizable)
quantum-limited gauge-contravariant channel $\Phi _{2}$.
\end{proposition}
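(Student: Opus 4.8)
The plan is to realize both decompositions directly at the level of the defining pairs $(K,\mu)$, using the composition rules (\ref{c1})--(\ref{c2}) and taking the intermediate mode space to coincide with the output space $\mathbf{Z}_B$. The guiding observation is that the two-sided inequality (\ref{ineq}) for a covariant channel encodes exactly the two positivity constraints needed to split $\mu$ into an amplifier part and an attenuator part, and similarly (\ref{ineq2}) does so in the contravariant case. So I would first write down the matrix equations that $\Phi=\Phi_2\circ\Phi_1$ imposes, namely $K=K_2K_1$ together with $\mu=K_2\mu_1K_2^{\ast}+\mu_2$, solve them by taking operator square roots, and finally check that each branch of the hypothesis is precisely what makes $\Phi_1$ a legitimate quantum-limited attenuator (as in (\ref{mindef1})) and $\Phi_2$ a legitimate quantum-limited amplifier (resp. contravariant channel).

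For the covariant case, set $N=\mu+\tfrac12(I+KK^{\ast})$. The branch $\mu\geq\tfrac12(I-KK^{\ast})$ of (\ref{ineq}) gives $N\geq I$, so $N$ is invertible and I can define the quantum-limited amplifier $\Phi_2$ by $K_2=N^{1/2}$ (whence $K_2K_2^{\ast}=N\geq I$ and $\mu_2=\tfrac12(N-I)$ as in (\ref{mindef2})) and the attenuator $\Phi_1$ by $K_1=N^{-1/2}K$, with $\mu_1=\tfrac12(I-K_1K_1^{\ast})$. A short computation using $K_2K_1=K$ shows $K_2\mu_1K_2^{\ast}+\mu_2=K_2K_2^{\ast}-\tfrac12(KK^{\ast}+I)=\mu$, so the pair reproduces $\Phi$. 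It remains only to verify $K_1K_1^{\ast}\leq I$: since $K_1K_1^{\ast}=N^{-1/2}KK^{\ast}N^{-1/2}$, this is equivalent to $KK^{\ast}\leq N$, i.e. to $\mu\geq\tfrac12(KK^{\ast}-I)$, which is exactly the other branch of (\ref{ineq}). Diagonalizability of $\Phi_2$ is automatic, because $K_2$ and $\mu_2$ are both functions of $N$ and hence simultaneously diagonalizable; equivalently it follows from the singular value decomposition as in (\ref{unieq}).

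For the contravariant case the same scheme works, but first I must recompute the composition rule, since (\ref{c1})--(\ref{c2}) were read off assuming both factors covariant. Composing a covariant attenuator $\Phi_1$ with a quantum-limited contravariant $\Phi_2$ and using $\mathrm{T}[D(z)]=D(-\Lambda z)$ together with the identities $K_1^{\ast}\Lambda=\Lambda K_1^{t}$ and $\mu_1\Lambda=\Lambda\bar\mu_1$, one finds that the phase-inverting factor $\Lambda$ can be pushed to the outside and that the resulting rule is formally identical to (\ref{c1})--(\ref{c2}) after replacing $K_1$ by its complex conjugate $\bar K_1$; since $\bar K_1\bar K_1^{\ast}=\overline{K_1K_1^{\ast}}\leq I$ if and only if $K_1K_1^{\ast}\leq I$, this relabeling does not affect the status of $\Phi_1$ as a quantum-limited attenuator. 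I would then set $M=\mu-\tfrac12(I-KK^{\ast})$; inequality (\ref{ineq2}) gives $M\geq KK^{\ast}\geq 0$, and I take $K_2=M^{1/2}$ (so $\mu_2=\tfrac12(I+M)$ as in (\ref{mindef5})) and $K_1=M^{-1/2}K$, again checking $K_2\mu_1K_2^{\ast}+\mu_2=\mu$ and $K_1K_1^{\ast}\leq I$. Diagonalizability of the quantum-limited contravariant $\Phi_2$ holds for the same reason as for amplifiers, since $\mu_2$ is determined by $K_2$.

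The main obstacle is the contravariant bookkeeping just described, and a minor technical point hidden in it: unlike $N\geq I$ in the covariant case, the operator $M$ need not be invertible, so $M^{-1/2}K$ requires care. Here I would use that $M\geq KK^{\ast}$ forces $\ker M\subseteq\ker K^{\ast}$, equivalently $\mathrm{ran}\,K\subseteq(\ker M)^{\perp}$; hence $M^{-1/2}$, defined as the inverse of $M^{1/2}$ on $(\ker M)^{\perp}$, composes with $K$ to give a well-defined $K_1$ with $K_2K_1=K$, and the bound $KK^{\ast}\leq M$ still yields $K_1K_1^{\ast}=M^{-1/2}KK^{\ast}M^{-1/2}\leq I$. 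Everything else is routine matrix manipulation built on the composition rules.
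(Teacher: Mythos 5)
Your proposal is correct and follows essentially the same route as the paper: you solve the composition equations (\ref{c1})--(\ref{c2}) with the quantum-limited ansatz, obtaining the same $K_2K_2^{\ast}=\mu+\tfrac12(KK^{\ast}+I)$ (resp. $\mu+\tfrac12(KK^{\ast}-I)$) and using the two branches of (\ref{ineq}) (resp. (\ref{ineq2})) exactly as the authors do, with only cosmetic differences (you verify $K_1K_1^{\ast}\leq I$ directly from $KK^{\ast}\leq N$ rather than via $K^{\ast}(KK^{\ast})^{-}K\leq I$, and you spell out the $\Lambda$-conjugation bookkeeping that the paper compresses into $K\Lambda=K_2\Lambda K_1$).
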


\begin{proof}
By inserting
\begin{equation*}
\mu _{1}=\frac{1}{2}\left( I-K_{1}K_{1}^{\ast }\right) =\frac{1}{2}\left(
I-|K_{1}^{\ast }|^{2}\right) ,\quad \mu _{2}=\frac{1}{2}\left(
K_{2}K_{2}^{\ast }-I\right) =\frac{1}{2}\left( \left\vert K_{2}^{\ast
}\right\vert ^{2}-I\right)
\end{equation*}%
into (\ref{c2}) and using (\ref{c1}) we obtain
\begin{equation}
\left\vert K_{2}^{\ast }\right\vert ^{2}=K_{2}K_{2}^{\ast }=\mu +\frac{1}{2}%
(KK^{\ast }+I)\geq \left\{
\begin{array}{c}
I \\
KK^{\ast }%
\end{array}%
\right.  \label{ineq1}
\end{equation}%
from the inequality (\ref{ineq}). By using operator monotonicity of the
square root, we have
\begin{equation*}
\left\vert K_{2}^{\ast }\right\vert \geq I,\quad \left\vert K_{2}^{\ast
}\right\vert \geq \left\vert K^{\ast }\right\vert .
\end{equation*}%
The first inequality (\ref{ineq1}) implies that choosing $K_{2}=\left\vert
K_{2}^{\ast }\right\vert =\sqrt{\mu +\frac{1}{2}(KK^{\ast }+I)}$ and the
corresponding $\mu _{2}=\frac{1}{2}\left( \left\vert K_{2}^{\ast
}\right\vert ^{2}-I\right) ,$ we obtain diagonalizable quantum-limited
amplifier, since $K_{2}$ and $\mu _{2}$ are commuting Hermitian operators.
Notice, that in general we can take $K_{2}=\left\vert K_{2}^{\ast
}\right\vert V,$ where $V$ is arbitrary co-isometry, $VV^{\ast }=I$.

Then with $K_{1}=\left\vert K_{2}^{\ast }\right\vert ^{-1}K$ we obtain,
taking into account the second inequality in (\ref{ineq1})
\begin{equation*}
K_{1}^{\ast }K_{1}=K^{\ast }\left\vert K_{2}^{\ast }\right\vert
^{-2}K=K^{\ast }\left[ \mu +\frac{1}{2}(KK^{\ast }+I)\right] ^{-1}K\leq
K^{\ast }(KK^{\ast })^{-}K\leq I,
\end{equation*}
where $^{-}$ means generalized inverse, which implies $K_{1}^{\ast
}K_{1}\leq I,$ hence $K_{1}$ with the corresponding $\mu _{1}=\frac{1}{2}
\left( I-K_{1}K_{1}^{\ast }\right) $ give the quantum-limited attenuator.

In the case of contravariant channel the equations (\ref{c1}) is replaced
with
\begin{equation*}
K\Lambda =K_{2}\Lambda K_{1}.
\end{equation*}
By substituting this and
\begin{equation*}
\mu _{1}=\frac{1}{2}\left( I-K_{1}K_{1}^{\ast }\right) ,\quad \mu _{2}=\frac{
1}{2}\left( K_{2}K_{2}^{\ast }+I\right)
\end{equation*}
into (\ref{c2}) and using (\ref{ineq2}) we obtain
\begin{equation*}
\left\vert K_{2}^{\ast }\right\vert ^{2}=K_{2}K_{2}^{\ast }=\mu +\frac{1}{2}
(KK^{\ast }-I)\geq KK^{\ast }.
\end{equation*}
Taking $K_{2}=\left\vert K_{2}^{\ast }\right\vert ,$ $\mu _{2}=\frac{1}{2}
\left( |K_{2}^{\ast }|^{2}+I\right) $ gives diagonalizable quantum-limited
gauge-contravariant channel. With $K_{1}=\Lambda ^{-1}\left\vert K_{2}^{\ast
}\right\vert ^{-}K\Lambda =\Lambda \left\vert K_{2}^{\ast }\right\vert
^{-}K\Lambda $ we obtain
\begin{eqnarray}
K_{1}^{\ast }K_{1}&=&\Lambda K^{\ast }\left( \left\vert K_{2}^{\ast
}\right\vert ^{-}\right) ^{2}K\Lambda  \notag \\
&=&\Lambda K^{\ast }\left[ \mu +\frac{1}{ 2}(KK^{\ast }-I)\right]
^{-}K\Lambda \leq K^{\ast }(KK^{\ast })^{-}K\leq I,
\end{eqnarray}
which implies $K_{1}K_{1}^{\ast }\leq I,$ hence $K_{1}$ with the
corresponding $\mu _{1}$ give the quantum-limited attenuator.
\end{proof}

\section{Entanglement breaking and additivity}

\label{sec:complandadd}

The additivity properties of finite-dimensional entanglement-breaking
channels \cite{shor} were generalized to infinite dimensions in \cite{Sh-2}.
Moreover it was shown in \cite{Sh-2} that the convex closure of the output entropy for such
channel $\Phi $, defined as
\begin{equation}
\hat{S}_{\Phi }\left( \sigma \right) =\inf_{\pi :\bar{\rho}_{\pi }=\sigma
}\int S\left( \Phi \lbrack \rho ]\right) \pi (d\rho ),  \label{convcl}
\end{equation}%
where the infimum is taken over all probability distributions $\pi $ on the
state space with the baricenter $\bar{\rho}_{\pi }\equiv \int \rho \pi
(d\rho )=\sigma ,$ is superadditive, i.e. for an entanglement-breaking
channel $\Phi $ and arbitrary channel $\Psi $
\begin{equation}
\hat{S}_{\Phi \otimes \Psi }\left( \sigma _{12}\right) \geq \hat{S}_{\Phi
}\left( \sigma _{1}\right) +\hat{S}_{\Psi }\left( \sigma _{2}\right)
\label{hiad1}
\end{equation}%
for any state $\sigma _{12}$. This implies additivity of the minimal output
entropy (see e.g. \cite{h}, Proposition 8.15)
\begin{equation}
\min_{\sigma _{12}}S\left( \Phi \otimes \Psi \left[ \sigma _{12}\right]
\right) =\min_{\sigma _{1}}S\left( \Phi \left[ \sigma _{1}\right] \right)
+\min_{\sigma _{2}}S\left( \Psi \left[ \sigma _{2}\right] \right) .
\label{hiad2}
\end{equation}
It turns out
that (\ref{hiad1}) implies
additivity of the constrained $\chi -$capacity of channel $\Phi$, e.g. see \cite{hs}, Sec. 6. Namely,
 let $H$ is positive selfadjoint operator (typically the energy operator), $%
H^{(n)}=H\otimes I\dots \otimes I+\dots +I\otimes \dots \otimes I\otimes H.$
Under mild regularity assumptions (see \cite{Sh-2}) which are fulfilled in
the Gaussian case the constrained $\chi -$capacity is equal to
\begin{equation}
C_{\chi }(\Phi ,H,E)=\sup_{\rho :\mathrm{Tr}\rho H\leq E}\left[ S\left( \Phi
\lbrack \rho ]\right) -\hat{S}_{\Phi }\left( \rho \right) \right] .
\label{cochi1}
\end{equation}%
Then for any entanglement-breaking channel $\Phi $
\begin{equation}
C_{\chi }(\Phi ^{\otimes n},H^{(n)},nE))=nC_{\chi }(\Phi ,H,E),
\label{hiad3}
\end{equation}%
implying

\begin{equation}
C(\Phi ,H,E)=C_{\chi }(\Phi ,H,E),  \label{clca}
\end{equation}%
where $C(\Phi ,H,E)$ is the constrained classical capacity of the channel $%
\Phi $.

Coming to Bosonic system we restrict ourselves to the class of states $%
\mathfrak{S}_{2}$ with finite second moments, satisfying $\mathrm{Tr}\rho
a_{j}^{\dag }a_{j}<\infty ,\,j=1,\dots ,s$ (for rigorous definition of the
moments see e.g. \cite{h}, Sec. 11.1). Notice that this class is invariant
under the action of all Gaussian channels and all the entropy  quantities are finite
for states in this class.
Moreover, one can check by
inspection of proofs that the additivity properties listed above hold for
Gaussian entanglement-breaking channels with optimizations restricted to the
class $\mathfrak{S}_{2},$ provided $\Psi $ is a Gaussian channel and $H$ is
a quadratic Hamiltonian in $a_{j},a_{j}^{\dag },\,\,j=1,\dots ,s.$

\textbf{In what follows we adopt the restriction to the class} $\mathfrak{S}_{2}$
\textbf{without explicit introducing it into notations.}

\begin{proposition}
\label{prop3} The Gaussian gauge-contravariant channel (\ref{CONTRAV}) is
entanglement-breaking.
\end{proposition}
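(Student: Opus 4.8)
The plan is to show that the quantum-limited gauge-contravariant channel admits a measure-and-prepare (Holevo) form, since a channel is entanglement-breaking precisely when it factors through a classical register. The contravariant channel, by its input-output equation $\mathbf{a}_{B}=-K\,\mathrm{T}[\mathbf{a}_{A}]+\sqrt{I+KK^{\ast}}\,\mathbf{a}_{E}$, already contains the transposition map $\mathrm{T}$, and transposition is the prototypical non-completely-positive operation that becomes physical only when ``mediated'' by a classical measurement. This is the structural hint I would exploit.

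First I would reduce to the single-mode, diagonalizable case. By Proposition~\ref{prop1} every contravariant channel is a concatenation of a quantum-limited attenuator $\Phi_1$ followed by a diagonalizable quantum-limited contravariant channel $\Phi_2$; since the class of entanglement-breaking channels is closed under composition with arbitrary channels on either side (composing an entanglement-breaking map with anything still factors through the same classical register), it suffices to establish the property for $\Phi_2$, and by diagonalizability for a single mode. So I would fix $s=1$ and a scalar $k\ge 0$, with $\mu=\tfrac12(1+|k|^2)$.

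Next I would produce the explicit measure-and-prepare decomposition. The natural candidate is heterodyne (coherent-state) measurement on the input followed by preparation of a displaced Gaussian state, i.e. a positive-operator-valued measure built from coherent states $\tfrac{1}{\pi}\,|w\rangle\langle w|\,d^2w$ together with a conditional output state that is a (thermal, displaced) Gaussian whose displacement depends antilinearly on $w$ to reproduce the phase-conjugation. Concretely I would guess $\Phi[\rho]=\int \tfrac{d^2w}{\pi}\,\langle w|\rho|w\rangle\,\rho_{\bar w}$ for a suitable family $\rho_{\bar w}=D(c\bar w)\,\rho_{\mathrm{th}}\,D(c\bar w)^{\ast}$, and then verify on displacement operators that this reproduces~(\ref{CONTRAV}). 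The verification is a Gaussian integral: one computes $\mathrm{Tr}\big(\Phi[D_A(z')]\,D_B(z)\big)$ against both sides using the overlap formula for coherent states and the characteristic function~(\ref{gausstate}) of the thermal core, matching the linear term $K^{t}\bar z$ and the Gaussian exponent $z^{\ast}\mu z$. I would tune the scale $c$ and the thermal covariance so that the output covariance equals $\tfrac12(1+|k|^2)$ and the phase is conjugated.

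The main obstacle I anticipate is not positivity of the POVM — coherent-state measurement is manifestly a valid POVM — but pinning down the conditional preparation so that the integral reproduces exactly the minimal $\mu$ of~(\ref{mindef5}) rather than something with excess noise; the phase-inversion $\bar z$ has to emerge from the antilinear dependence of the displacement on the measurement outcome, and getting the covariance to saturate the contravariant bound~(\ref{ineq2}) is the delicate bookkeeping step. Once the identity holds on all $D(z)$, it extends to all states by linearity and weak continuity, and the explicit coherent-state form exhibits $\Phi$ as measure-and-prepare, hence entanglement-breaking. I note that this same explicit decomposition is promised later as Proposition~\ref{prop5}, so I would expect the present proof either to invoke that construction or to give the shorter argument that the transposition-mediated structure forces factorization through the coherent-state POVM.
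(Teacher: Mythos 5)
Your argument is correct in outline, but it takes a genuinely different and more laborious route than the paper. The paper's proof is essentially one line: it invokes the general entanglement-breaking criterion for Gaussian channels (\cite{h}, Sec.~12.7.2), which for the contravariant channel~(\ref{CONTRAV}) reduces to exhibiting a decomposition $\mu=\mu_{1}+\mu_{2}$ with $\mu_{1}\geq\frac{1}{2}KK^{\ast}$ and $\mu_{2}\geq\frac{1}{2}I$; by~(\ref{ineq2}) this is immediate (take $\mu_{2}=\frac{1}{2}I$ and put all the excess noise into $\mu_{1}$), and it applies directly to the \emph{general} multimode contravariant channel with no reduction to the quantum-limited or single-mode case. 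What you propose instead is to build the measure-and-prepare form by hand: reduce via Proposition~\ref{prop1} to a diagonalizable quantum-limited contravariant channel (using that entanglement-breaking maps are stable under pre- and post-composition and under tensor products --- all correct and non-circular, since Proposition~\ref{prop1} precedes this statement), then verify the heterodyne-measure/coherent-reprepare identity on displacement operators. That identity is exactly the paper's later Proposition~\ref{prop5}, where for the quantum-limited channel~(\ref{mindef5}) the reprepared states are the pure coherent states $|-K\bar{z}\rangle$ (your thermal core degenerates to the vacuum once you have reduced to the quantum-limited case, and the reduction to one mode is not actually needed --- the multimode Gaussian integral goes through directly). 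So your construction buys an explicit Kraus/measure-reprepare representation, which the paper needs anyway later, at the cost of a Gaussian computation you have only sketched; the paper's citation of the matrix criterion buys the result for arbitrary $\mu$ in one step but leaves the explicit decomposition to Section~\ref{sec:new}. The one thing you should still do to make your version complete is actually carry out the verification integral (the two formulas you quote, ${\langle}u|D(w)|u\rangle=\exp(2i\Im\bar{u}w-|w|^{2}/2)$ and $|{\langle}u|z\rangle|^{2}=\exp(-|u-z|^{2})$, do suffice).
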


The proof is based on the general criterion of entanglement breaking for
Gaussian channels \cite{h}, Sec. 12.7.2, which in this case amounts to:
there exist a decomposition $\mu =\mu _{1}+\mu _{2}$ such that $\mu _{1}\geq
\frac{1}{2}KK^{\ast },\mu _{2}\geq \frac{1}{2}I.$ This is indeed the case
with $\mu _{1}=\frac{1}{2}KK^{\ast },\mu _{2}=\frac{1}{2}I.$ The
decomposition corresponds to representation of the channel as 1) measurement
of a Gaussian observable and 2) subsequent preparation of coherent states
depending on the outcome of the measurement (see Sec. \ref{sec:new} for
detail).

Proposition \ref{prop3} implies that the additivity properties (\ref{hiad1}%
),(\ref{hiad2}),(\ref{hiad3}) hold for a Gaussian gauge-contravariant
channels $\Phi $ and arbitrary channel $\Psi $.

Denote by $\tilde{\Phi}$ the complementary channel for $\Phi ,$ then it is
known that
\begin{eqnarray}
\min_{\sigma }S\left( \Phi \left[ \sigma \right] \right) &=&\min_{\sigma
}S\left( \tilde{\Phi}\left[ \sigma \right] \right) ,  \label{equality} \\
\hat{S}_{\Phi }\left( \sigma \right) &=&\hat{S}_{\tilde{\Phi}}\left( \sigma
\right) ,  \notag
\end{eqnarray}%
see \cite{h1}, \cite{mkr}, or Sec.6.6.6 of \cite{h}.

\begin{proposition}
\label{prop2} The quantum-limited amplifier with diagonal matrix \ $K$ and
quantum-limited gauge-contravariant channel with diagonal matrix $\sqrt{%
KK^{\ast }-I}\Lambda $ are mutually complementary.
\end{proposition}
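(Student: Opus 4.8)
The plan is to exhibit a single canonical (unitary) dilation shared by both channels and simply read off the two marginals. Since the matrix $K$ is diagonal, both channels factor into tensor products of one-mode channels, so it suffices to argue mode by mode; accordingly I take $K$ to be a scalar $k\geq 1$, with $\sqrt{KK^{\ast}-I}=\sqrt{k^{2}-1}$. The quantum-limited amplifier arises from a two-mode Bogoliubov (squeezing) transformation acting on the input mode $a_{A}$ together with an environment mode $a_{E}$ in the vacuum, whose signal output is the stated input-output relation $a_{B}=k\,a_{A}+\sqrt{k^{2}-1}\,\mathrm{T}[a_{E}]$.

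First I would complete this Bogoliubov transformation to the full symplectic map. Preservation of the canonical commutation relations forces the environment output to be $a_{E'}=\sqrt{k^{2}-1}\,\mathrm{T}[a_{A}]+k\,a_{E}$, which is just the two-mode squeezer with the roles of signal and idler interchanged. By definition the complementary channel $\tilde{\Phi}$ retains the environment output $E'$ and traces out $B$; hence it is the Gaussian channel whose input-output relation, with $a_{E}$ in the vacuum, is exactly $a_{E'}=\sqrt{k^{2}-1}\,\mathrm{T}[a_{A}]+k\,a_{E}$.

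I would then match this against the input-output form of the quantum-limited gauge-contravariant channel, namely $a_{B}=-k'\,\mathrm{T}[a_{A}]+\sqrt{1+k'^{2}}\,a_{E}$. Identifying $k'=\sqrt{k^{2}-1}$ (the sign being immaterial, since it is absorbed into a phase rotation of the output that leaves the channel unchanged), the environment coefficient matches automatically because $\sqrt{1+k'^{2}}=\sqrt{k^{2}}=k$. As the environment is in the vacuum, the dilation is of minimal noise, so $\tilde{\Phi}$ is automatically quantum-limited and the noise term $\mu=\frac{1}{2}(I+K'K'^{\ast})$ required by (\ref{mindef5}) holds without a separate computation. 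The factor $\Lambda$ in the stated matrix $\sqrt{KK^{\ast}-I}\,\Lambda$ is precisely the complex-conjugation operator through which the phase inversion $\mathrm{T}$ enters the defining relation (\ref{CONTRAV}), so $\tilde{\Phi}$ is the quantum-limited contravariant channel with matrix $\sqrt{KK^{\ast}-I}\,\Lambda$, as claimed.

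I expect the main obstacle to be bookkeeping rather than conceptual: keeping straight the transposes, the placement of $\Lambda$, and the signs when passing from the one-mode scalar computation back to the multimode diagonal statement, and confirming that the partner output of the Bogoliubov map is indeed the one I wrote. A cleaner and fully rigorous alternative, which bypasses the input-output heuristics, is to compute $\tilde{\Phi}^{\ast}[D(z)]$ directly from the dilation isometry and compare it term by term with (\ref{CONTRAV}); I would fall back on this route to certify both the matrix and the noise $\mu$ unambiguously.
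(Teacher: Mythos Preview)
Your proposal is correct and is in fact more detailed than what the paper provides. The paper does not give a self-contained proof of this proposition at all: it simply observes that in the diagonal case the channels split into tensor products of one-mode channels, and then defers the one-mode case to the references \cite{cgh} and \cite{h}, Sec.~12.6.1. Your argument via the two-mode Bogoliubov (squeezing) dilation, reading off the signal and idler outputs as the channel and its complement, is exactly the standard computation one finds in those references, so you are supplying the content the paper outsources. Your remark that the residual sign in front of the $\mathrm{T}[a_{A}]$ term is absorbed by a phase is correct (one can equally absorb it into a vacuum-preserving phase on the environment mode), and your caveat about the bookkeeping with $\Lambda$ and the transposes is well placed but does not hide any genuine obstruction.
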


In the diagonal case the channels split into tensor products of one-mode
channels. For the proof in the case of one mode see \cite{cgh} or \cite{h},
Sec. 12.6.1.

Via Proposition \ref{prop2} and (\ref{equality}), the additivity properties (%
\ref{hiad1}), (\ref{hiad2}) then also hold for the diagonal quantum-limited
amplifier $\Phi $ and arbitrary channel $\Psi $.

\section{Reductions of the Gaussian optimizer conjecture}

\label{sec:reduction}

For the multi-mode quantum-limited attenuator $\Phi _{1}$ the vacuum $\rho
_{vac}$ is invariant state, hence the minimal output entropy, equal to zero,
is attained on the vacuum (as well as on any coherent state). Using
Proposition \ref{prop1} and the argument from \cite{gp,QCMC} allows to reduce
solution of the minimal entropy conjecture for arbitrary channel satisfying
the conditions of Proposition to the case of quantum-limited amplifier.
Since quantum-limited amplifier $\Phi _{2}$ is additive, this argument also
implies the additivity property of the minimal output entropy. More
precisely,

\begin{proposition}
\label{prop4} Under the hypothesis:

\textbf{(A) The minimal output entropy of any one-mode quantum-limited
amplifier is attained on the vacuum state},

the following relation holds
\begin{equation}
\min_{\rho ^{(n)}}S\left( \Phi ^{\otimes n}\left[ \rho ^{(n)}\right] \right)
=nS\left( \Phi \left[ \rho _{vac}\right] \right),  \label{minent}
\end{equation}%
for any Gaussian gauge-covariant or contravariant channel $\Phi $.

Moreover, for any collection of channels $\Phi _{1},\dots ,\Phi _{N}$
satisfying the condition of the Proposition \ref{prop1}
\begin{equation}
\min_{\rho ^{(n)}}S\left( (\otimes _{j=1}^{N}\Phi _{j})\left[ \rho ^{(n)} %
\right] \right) =\sum_{j=1}^{N}\min_{\rho _{j}}S\left( \Phi _{j}\left[ \rho
_{j}\right] \right) =\sum_{j=1}^{N}S\left( \Phi _{j}\left[ \rho _{_{j,vac}} %
\right] \right) .  \label{hiad4}
\end{equation}
\end{proposition}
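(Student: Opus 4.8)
The plan is to exploit the concatenation structure furnished by Proposition \ref{prop1}, writing any covariant (contravariant) $\Phi$ as $\Phi=\Phi_2\circ\Phi_1$ with $\Phi_1$ a quantum-limited attenuator and $\Phi_2$ a diagonalizable quantum-limited amplifier (resp. contravariant channel). The crucial elementary observation is that the attenuator leaves the vacuum invariant, $\Phi_1[\rho_{vac}]=\rho_{vac}$, so that $\Phi[\rho_{vac}]=\Phi_2[\rho_{vac}]$. Since $\Phi_1$ is itself a channel, for every input $\rho$ the state $\Phi_1[\rho]$ is admissible for $\Phi_2$, whence
\begin{equation*}
\min_\rho S(\Phi[\rho])=\min_\rho S(\Phi_2[\Phi_1[\rho]])\geq\min_\sigma S(\Phi_2[\sigma]).
\end{equation*}
Thus the whole single-copy problem reduces to showing that the second-stage channel $\Phi_2$ attains its minimal output entropy on the vacuum: once this is known, the lower bound $S(\Phi[\rho])\geq S(\Phi_2[\rho_{vac}])=S(\Phi[\rho_{vac}])$ holds for all $\rho$, while the vacuum input saturates it.

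Next I would establish the vacuum-optimality of $\Phi_2$. For the amplifier, $\Phi_2$ is diagonalizable, so by (\ref{unieq}) it is a tensor product of one-mode quantum-limited amplifiers conjugated by passive unitaries that fix the vacuum; hypothesis (A) gives vacuum-optimality for each factor, and the additivity of the amplifier (Proposition \ref{prop2} together with the complementarity relation (\ref{equality})) upgrades this to the multimode $\Phi_2$, with the product of vacua as the minimizer. For the contravariant $\Phi_2$ I would transfer (A) through complementarity: by Proposition \ref{prop2} the quantum-limited contravariant channel and the quantum-limited amplifier are mutually complementary, so (\ref{equality}) equates their minimal output entropies, while for the pure vacuum input the outputs of a channel and its complement share the same nonzero spectrum, giving $S(\Phi_2[\rho_{vac}])=S(\tilde{\Phi}_2[\rho_{vac}])$; combining these identifies the vacuum as the minimizer for $\Phi_2$ as well. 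In both cases $\Phi_2$ is additive — directly as an entanglement-breaking map (Proposition \ref{prop3}) in the contravariant case, and via complementarity with its entanglement-breaking complement in the amplifier case.

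To obtain the $n$-copy identity (\ref{minent}) I would apply the same concatenation bound to $\Phi^{\otimes n}=\Phi_2^{\otimes n}\circ\Phi_1^{\otimes n}$, reducing it to $\min_{\sigma^{(n)}}S(\Phi_2^{\otimes n}[\sigma^{(n)}])$, and then invoke additivity of $\Phi_2$. Additivity for $n$ copies follows by iterating the two-factor relation (\ref{hiad2}): writing $\Phi_2^{\otimes n}=\Phi_2\otimes\Phi_2^{\otimes(n-1)}$ and using that $\Phi_2$ (being entanglement-breaking, or sharing the additivity of its entanglement-breaking complement) satisfies the superadditivity (\ref{hiad1}) against the arbitrary channel $\Psi=\Phi_2^{\otimes(n-1)}$, one peels off one factor at a time, obtaining $\min_{\sigma^{(n)}}S(\Phi_2^{\otimes n}[\sigma^{(n)}])=nS(\Phi_2[\rho_{vac}])=nS(\Phi[\rho_{vac}])$. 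The multichannel identity (\ref{hiad4}) is handled identically: decompose each $\Phi_j=\Phi_{2,j}\circ\Phi_{1,j}$, push the attenuators through (they fix the vacuum), and apply (\ref{hiad2}) repeatedly to $\otimes_{j}\Phi_{2,j}$, each time splitting off a single $\Phi_{2,j}$ whose additivity property is guaranteed by Propositions \ref{prop2} and \ref{prop3}.

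The step I expect to be the main obstacle is the inductive application of (\ref{hiad2}) to a tensor product of several second-stage channels of mixed type. Each elementary step requires that the factor being split off genuinely enjoys the superadditivity (\ref{hiad1}) with the remaining (arbitrary, possibly amplifier-containing) tail as the passive partner $\Psi$; verifying that this hypothesis is met uniformly — for contravariant factors directly via entanglement breaking, and for amplifier factors via the complementarity transfer of (\ref{hiad1}) and (\ref{hiad2}) — and checking that the restriction to the class $\mathfrak{S}_2$ is preserved throughout the chain, is where the care is needed.
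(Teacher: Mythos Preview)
Your proposal is correct and follows essentially the same route as the paper's own proof: the decomposition $\Phi=\Phi_2\circ\Phi_1$ from Proposition~\ref{prop1}, vacuum-invariance of the attenuator to identify $\Phi[\rho_{vac}]=\Phi_2[\rho_{vac}]$, hypothesis~(A) plus additivity~(\ref{hiad2}) to get vacuum-optimality of the diagonalizable $\Phi_2$, complementarity via Proposition~\ref{prop2} and~(\ref{equality}) to handle the contravariant case, and then the chain of inequalities closing up for the tensor product. The only cosmetic difference is that the paper presents the multichannel case~(\ref{hiad4}) as a single closed chain of (in)equalities rather than an explicit induction peeling off one factor at a time, but the content is the same; your caution about the iterated use of~(\ref{hiad2}) with mixed factor types is well-placed and is exactly what the paper invokes when it says ``by the additivity~(\ref{hiad2}) of the minimal output entropy for Gaussian gauge-contravariant channels and complementary quantum-limited amplifiers.''
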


\begin{proof}
First, let us notice that by the hypothesis (A) and the additivity property (%
\ref{hiad2}) we have for any multi-mode diagonal quantum-limited amplifier $%
\Phi =\otimes _{j=1}^{s}\Phi _{j},$ where $\Phi _{j}$ are one-mode
quantum-limited amplifiers,
\begin{equation}
\min_{\rho }S\left( \Phi \left[ \rho \right] \right)
=\sum_{j=1}^{s}\min_{\rho _{j}}S\left( \Phi _{j}\left[ \rho _{j}\right]
\right) =\sum_{j=1}^{s}S\left( \Phi _{j}\left[ \rho _{j,vac}\right] \right)
=S\left( \Phi \left[ \rho _{vac}\right] \right) .  \label{A_s}
\end{equation}%
It follows that the analog of property (A) holds for any multi-mode
diagonalizable quantum-limited amplifier because for such channel%
\begin{equation*}
\Phi \left[ \rho \right] =U^{\ast }(\otimes _{j=1}^{N}\Phi _{j})\left[ U\rho
U^{\ast }\right] U,
\end{equation*}%
where $U$ is the unitary operator in $\mathcal{H}^{\otimes N}$ implementing
the unitary transformation $R$ in $Z$ which diagonalizes $K.$ Notice also that $%
U\rho _{vac}U^{\ast }=\rho _{vac}$ because $U^{\ast }\mathbf{a}U=R\mathbf{a}%
. $

Therefore by complementarity (Proposition \ref{prop2} and (\ref{equality}))
similar property holds for diagonalizable quantum-limited
gauge-contravariant channel.

Let us prove (\ref{hiad4}) since (\ref{minent}) is a particular case. For
every channel $\Phi _{j}$ we have decomposition $\Phi _{j}=\Phi _{2,j}\circ
\Phi _{1,j},$ where $\Phi _{1,j}$ is quantum-limited attenuator and $\Phi
_{2,j}$ has the property (\ref{A_s}). Then
\begin{equation*}
\sum_{j=1}^{N}S\left( \Phi _{j}\left[ \rho _{_{j,vac}}\right] \right)
=\sum_{j=1}^{N}S\left( \Phi _{2,j}\left[ \rho _{_{j,vac}}\right] \right) ,
\end{equation*}%
where the invariance of vacuum state under $\Phi _{1,j}$ was used. By (\ref%
{A_s}) this is equal to
\begin{equation*}
\sum_{j=1}^{N}\min_{\sigma _{j}}S\left( \Phi _{2,j}\left[ \sigma _{j}\right]
\right) .
\end{equation*}%
By the additivity (\ref{hiad2}) of the minimal output entropy for Gaussian
gauge-contravariant channels and complementary quantum-limited amplifiers,
the last sum is equal to
\begin{eqnarray*}
\min_{\sigma ^{(n)}}S\left( (\otimes _{j=1}^{N}\Phi _{2,j})\left[ \sigma
^{(n)}\right] \right) &\leq &\min_{\rho ^{(n)}}S\left( (\otimes
_{j=1}^{N}\Phi _{j})\left[ \rho ^{(n)}\right] \right) \\
&\leq &\sum_{j=1}^{N}S\left( \Phi _{j}\left[ \rho _{_{j,vac}}\right] \right)
,
\end{eqnarray*}%
hence (\ref{hiad4}) follows.
\end{proof}

We now turn to the classical capacity. In \cite{h} Proposition 12.39 it is
shown that for a Gaussian gauge-covariant channel $\Phi $, the validity of
the minimal output entropy conjecture implies positive solution of the
optimal Gaussian ensemble conjecture for the $\chi -$capacity under the
energy constraint \textrm{Tr}$\rho H\leq E$ with gauge-invariant oscillator
Hamiltonian $H=\sum_{j,k=1}^{s}a_{j}^{\dag }\epsilon _{jk}a_{k},$ where $ \epsilon =
\left[ \epsilon _{jk}\right] $ is a positive definite matrix. The resulting
expression is
\begin{eqnarray}
C_{\chi }(\Phi ,H,E)&=&\max_{\rho :\mathrm{Tr}\rho H\leq E}S\left( \Phi %
\left[ \rho \right] \right) -\min_{\rho }S\left( \Phi \left[ \rho \right]
\right)  \notag \\
&=&\max_{\rho :\mathrm{Tr}\rho H\leq E}S\left( \Phi \left[ \rho \right]
\right) -S\left( \Phi \left[ \rho _{vac}\right] \right) .  \label{cchi}
\end{eqnarray}
Then, under the hypothesis (A), one shows the additivity of $C_{\chi }(E)$
similarly to the case of one-mode pure loss channel in \cite{gio}:

\begin{eqnarray*}
&&n\left[ \max_{\rho :\mathrm{Tr}\rho H\leq E}S\left( \Phi \left[ \rho %
\right] \right) -S\left( \Phi \left[ \rho _{vac}\right] \right) \right] \leq
nC_{\chi }(\Phi ,H,E) \leq C_{\chi }(\Phi ^{\otimes n},H^{(n)},nE) \\
&&\qquad \qquad \leq \max_{\rho ^{(n)}:\mathrm{Tr}\rho ^{(n)}H^{(n)}\leq
nE}S\left( \Phi ^{\otimes n}\left[ \rho ^{(n)}\right] \right) -\min_{\rho
^{(n)}}S\left( \Phi ^{\otimes n}\left[ \rho ^{(n)}\right] \right) \\
&&\qquad \qquad \leq n\left[ \max_{\rho :\mathrm{Tr}\rho H\leq E}S\left(
\Phi \left[ \rho \right] \right) -S\left( \Phi \left[ \rho _{vac}\right]
\right) \right] ,
\end{eqnarray*}
where in the last inequality we used Lemma 11.20 \cite{h} and (\ref{minent}).

Thus $C_{\chi }(\Phi ^{\otimes n},H^{(n)},nE)=nC_{\chi }(\Phi ,H,E)$ and
hence the constrained classical capacity $C(\Phi ,H,E)=\lim_{n\rightarrow
\infty }\frac{1}{n}C_{\chi }(\Phi ^{\otimes n},H^{(n)},nE)$ of the Gaussian
gauge-covariant channel is given by the same expression (\ref{cchi}).

Now let us use the familiar formula for the entropy of Gaussian state (\ref
{gausstate}) $S\left( \rho \right) =\mathrm{tr\,}g(\alpha -I/2),$ where $g(x)=(x+1)\log (x+1)-x\log x,$ and $\mathrm{tr}$ denotes trace of operators
in $\mathbf{Z}.$ Applying the transformation rule $\alpha \rightarrow
K\alpha K^{\ast }+\mu $ of the covariance matrix $\alpha $ under the action
of the channel (\ref{defprima}), see e.g. \cite{h} Ch. 12, to the vacuum
state with $\alpha =I/2,$ we obtain explicitly the minimal output entropy
\begin{equation}\label{entro}
\min_{\rho }S(\Phi ^{\otimes n}[\rho ])=n\,\mathrm{tr\,}g(\mu +\left(
KK^{\ast }-I\right) /2).
\end{equation}
For the classical capacity the relation (\ref{cchi}) gives%
\begin{eqnarray*}
C(\Phi ;H,E) &=&C_{\chi }(\Phi ;H,E) \label{capa} \\
&=&\max_{\nu \in \Sigma _{E}}\,\mathrm{tr\,}g(K\nu K^{\ast }+\mu +\left(
KK^{\ast }-I\right) /2)-g(\mu +\left( KK^{\ast }-I\right) /2),\nonumber
\end{eqnarray*}
where $\Sigma _{E}=\left\{ \nu :\nu \geq 0,\,\mathrm{tr}\nu \epsilon \leq
E\right\} $ is the set of covariance matrices $\nu $ of the Gaussian
ensemble with the energy constraint. This reduces to a finite-dimensional
optimization problem which is a quantum analog of
``water-filling'' problem in classical information theory, see e.g. \cite{COVER,h2,WATER,BROAD,JOA,OLEG}. It
can be solved explicitly only in some special cases, e.g. when $K,\mu ,\epsilon $ commute, and it is a subject of separate study.

Similar argument applies to Gaussian gauge-contravariant channel (\ref{CONTRAV}).

\section{Some useful properties of contravariant channels}

\label{sec:new}

From the previous sections it follows that minimal output entropy problem
for an arbitrary gauge-covariant channel $\Phi $ can be solved by proving
that the minimal output entropy conjecture for a single-mode quantum-limited
amplifier. Alternatively, due to the identity~(\ref{equality}) and to
Proposition~\ref{prop2}, this is equivalent to show that single-mode
quantum-limited contravariant map $\tilde{\Phi}$ has the vacuum as minimizer
of output entropy. We start hence by providing a characterization of the
output states of these maps. Even though for the proof of the conjecture we
need only the single-mode case, for the sake of generality we will present
them in the multi-mode scenario. Denote by $|z\rangle =D(z)|0\rangle ,\,z\in
\mathbf{Z}$, the multimode coherent states. From now on we skip the
subscripts $A,B$ occasionally, since it should be clear from the context.

\begin{proposition}
\label{prop5} Given a quantum-limited Gaussian gauge-contravariant channel $%
\Phi $ described by the matrix $K$ via Eqs.~(\ref{CONTRAV}) and (\ref%
{mindef5}), its action on an input state $\rho $ can be expressed as the
following measure-reprepare mapping
\begin{equation}
\rho \mapsto \Phi \lbrack \rho ]=\int \frac{d^{2s}z}{\pi ^{s}}|-K\bar{z}%
\rangle \langle -K\bar{z}|\;{\langle }z|\rho |z\rangle .  \label{EQ1}
\end{equation}
\end{proposition}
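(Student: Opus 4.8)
The plan is to show that the measure-reprepare map on the right-hand side of~(\ref{EQ1}), which I denote $\Phi_{\mathrm{MR}}$, agrees with the channel $\Phi$ by comparing the two maps at the level of characteristic functions. Since the displacement operators act irreducibly and form a complete family, two channels coincide as soon as $\mathrm{Tr}\bigl(\Phi_{\mathrm{MR}}[\rho]\,D_B(w)\bigr)=\mathrm{Tr}\bigl(\rho\,\Phi^{\ast}[D_B(w)]\bigr)$ for every $w\in\mathbf{Z}_B$ and every input $\rho$; by~(\ref{CONTRAV}) and~(\ref{mindef5}) the target right-hand side is $\mathcal{F}_{\rho}(K^{t}\bar z)$ evaluated at $w$, i.e. $\mathcal{F}_{\rho}(K^{t}\bar w)\,e^{-w^{\ast}\mu w}$ with $\mu=\tfrac12(I+KK^{\ast})$, where $\mathcal{F}_{\rho}(\eta)=\mathrm{Tr}\,\rho\,D(\eta)$.

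I would first record two Gaussian identities. Using~(\ref{defDD}) and $|z\rangle=D(z)|0\rangle$, the coherent-state matrix element evaluates to
\[
\langle\beta|D(w)|\beta\rangle=\exp\!\bigl(-2i\,\Im(w^{\ast}\beta)-\tfrac12\|w\|^{2}\bigr),
\]
so the reprepared factor contributes $\langle -K\bar z|D_B(w)|-K\bar z\rangle=\exp\bigl(2i\,\Im(w^{\ast}K\bar z)-\tfrac12\|w\|^{2}\bigr)$. Next I would express the measurement weight through the characteristic function of the input by the overlap (Parseval) relation for the Weyl transform, giving
\[
\langle z|\rho|z\rangle=\int\frac{d^{2s}\eta}{\pi^{s}}\,\mathcal{F}_{\rho}(\eta)\,\exp\!\bigl(2i\,\Im(\eta^{\ast}z)-\tfrac12\|\eta\|^{2}\bigr),
\]
a normalization I would confirm by testing it on the vacuum.

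Substituting both expressions into $\mathrm{Tr}\bigl(\Phi_{\mathrm{MR}}[\rho]\,D_B(w)\bigr)=\int\frac{d^{2s}z}{\pi^{s}}\,\langle z|\rho|z\rangle\,\langle -K\bar z|D_B(w)|-K\bar z\rangle$ and using the conjugation identity $\Im(w^{\ast}K\bar z)=-\Im\bigl((K^{t}\bar w)^{\ast}z\bigr)$, the inner $z$-integral reduces to the Fourier kernel $\int\frac{d^{2s}z}{\pi^{s}}\exp\bigl(2i\,\Im((\eta-K^{t}\bar w)^{\ast}z)\bigr)=\pi^{s}\delta^{2s}(\eta-K^{t}\bar w)$. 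This collapses the $\eta$-integration onto $\eta=K^{t}\bar w$ and leaves $\mathcal{F}_{\rho}(K^{t}\bar w)\,\exp\bigl(-\tfrac12\|w\|^{2}-\tfrac12\|K^{t}\bar w\|^{2}\bigr)$. A one-line index check gives $\|K^{t}\bar w\|^{2}=w^{\ast}KK^{\ast}w$, so with $\mu=\tfrac12(I+KK^{\ast})$ the exponent becomes $-w^{\ast}\mu w$ and the whole expression equals $\mathcal{F}_{\rho}(K^{t}\bar w)\,e^{-w^{\ast}\mu w}$, matching $\mathrm{Tr}\bigl(\rho\,\Phi^{\ast}[D_B(w)]\bigr)$ exactly. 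As this holds for all $w$, we conclude $\Phi_{\mathrm{MR}}=\Phi$, which is~(\ref{EQ1}).

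I expect the main difficulty to be analytic rather than algebraic: the exchange of the $z$- and $\eta$-integrations and the appearance of the Dirac delta are only formal, so I would justify them by restricting to $\rho\in\mathfrak{S}_{2}$, where $\mathcal{F}_{\rho}$ is sufficiently regular for the overlap formula and Fubini's theorem to apply, or alternatively by a regularization (Gaussian smoothing of $\rho$ followed by a limit). The other delicate point is bookkeeping of the $\pi^{s}$ factors and of the phase/sign conventions fixed by~(\ref{defDD}): it is precisely getting $\Im(w^{\ast}K\bar z)=-\Im((K^{t}\bar w)^{\ast}z)$ and the delta normalization $\int\frac{d^{2s}z}{\pi^{s}}e^{2i\,\Im(u^{\ast}z)}=\pi^{s}\delta^{2s}(u)$ right that makes the two Gaussian exponents coincide.
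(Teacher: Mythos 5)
Your verification is correct: the Gaussian bookkeeping (the coherent-state expectation of $D(w)$, the sign identity $\Im(w^{\ast}K\bar z)=-\Im((K^{t}\bar w)^{\ast}z)$, the normalization $\int \pi^{-s}d^{2s}z\,e^{2i\Im(u^{\ast}z)}=\pi^{s}\delta^{2s}(u)$, and $\|K^{t}\bar w\|^{2}=w^{\ast}KK^{\ast}w$) all check out, and the resulting characteristic function $\mathcal{F}_{\rho}(K^{t}\bar w)e^{-w^{\ast}\mu w}$ with $\mu=\tfrac12(I+KK^{\ast})$ is exactly what (\ref{CONTRAV}) and (\ref{mindef5}) prescribe. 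The underlying idea is the same as the paper's --- test the duality $\mathrm{Tr}(\Phi_{\mathrm{MR}}[\rho]D(w))=\mathrm{Tr}(\rho\,\Phi^{\ast}[D(w)])$ by Gaussian integration --- but the execution differs in one useful respect. The paper does not work with arbitrary $\rho$ and the Weyl inversion formula; it compares the two candidates for $\Phi^{\ast}[D(w)]$ through their diagonal coherent-state matrix elements, i.e.\ it checks $\langle u|\Phi^{\ast}[D(w)]|u\rangle=\int\pi^{-s}d^{2s}z\,\langle -K\bar z|D(w)|-K\bar z\rangle\,|\langle u|z\rangle|^{2}$ for all $u,w$, which suffices because the Husimi function determines the operator. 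That variant involves only absolutely convergent Gaussian integrals (the weight $|\langle u|z\rangle|^{2}=e^{-|u-z|^{2}}$ is integrable), so the distributional step you flag as the main difficulty --- the Dirac delta and the interchange of the $z$- and $\eta$-integrations --- never arises and no restriction to $\mathfrak{S}_{2}$ or regularization is needed. Your route is fine provided you actually supply that justification (testing on coherent states, as you do for the vacuum, would in fact let you bypass it entirely and recover the paper's argument); the paper additionally notes that (\ref{EQ1}) is an instance of the general measure-reprepare form of Gaussian entanglement-breaking channels, which your direct computation does not need.
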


\begin{proof}
The relation (\ref{EQ1}) follows from the general measure-reprepare
representation of Gaussian entanglement-breaking channels, \cite{h}, Sec.
12.7.2. For a direct verification of~(\ref{EQ1}) it is sufficient to show
that Husimi functions (diagonal values in the coherent-state representation)
coincide for operator (\ref{CONTRAV}) and the dual of (\ref{EQ1}). This
amounts to the identity (where we redenoted some variables)
\begin{equation*}
{\langle }u|D(\Lambda K^{\ast }w)|u\rangle\; e^{-\frac{1}{2}w^{\ast }\left(
I+KK^{\ast }\right) w }=\int \frac{d^{2s}z}{\pi ^{s}}\langle -K\bar{z}%
|D(w)|-K\bar{z}\rangle \;\left\vert {\langle }u|z\rangle \right\vert ^{2},
\end{equation*}%
which is verified by using the formulas ${\langle }u|D(w)|u\rangle =\exp
\left( 2i\Im\bar{u}w-|w|^{2}/2\right) $ and $\left\vert {\langle }u|z\rangle
\right\vert ^{2}=\exp \left( -|w-z|^{2}\right) .$
\end{proof}

For any channel (\ref{EQ1}) we introduce the following \textit{skewed}
counterparts defined as
\begin{eqnarray}
\rho \mapsto \Psi _{+}[\rho ] &=&\int \frac{d^{2s}z}{\pi ^{s}}|\bar{K}{z}\rangle
\langle \bar{K}{z}|\;{\langle }z|\rho |z\rangle ,  \label{PSI+} \\
\rho \mapsto \Psi _{-}[\rho ] &=&\int \frac{d^{2s}z}{\pi ^{s}}|-\bar{K}{z}\rangle
\langle -\bar{K}{z}|\;{\langle }z|\rho |z\rangle .
\end{eqnarray}%
These are again measure-reprepare channels where, differently from $\Phi $
of Eq.~(\ref{EQ1}), after a measurement outcome $z,$ the output system is
initialized into the coherent states $|\pm \bar{K} z\rangle $. Similarly to
Proposition~\ref{prop5} one can verify that in the Heisenberg representation
the channels $\Psi _{\pm }$ are described by the gauge-covariant mappings
\begin{equation}
\Psi _{\pm }^{\ast }[D(z)]=D(\pm \bar{K}z)\exp \left( -z^{\ast }(\bar{K}\bar{K}^{\ast
}+I)z/2\right) .  \label{PSI++}
\end{equation}%
However these channels are no longer quantum-limited.

\begin{proposition}
\label{prop6} The output states $\Psi _{+}[\rho ]$, $\Psi _{-}[\rho ]$, and $%
\Phi \lbrack \rho ]$ have the same eigenvalues and hence the same entropy.
\end{proposition}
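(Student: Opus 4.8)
The plan is to exhibit the three output states as images of one another under spectrum-preserving conjugations: a genuine unitary relating $\Psi_{+}[\rho]$ and $\Psi_{-}[\rho]$, and the antiunitary complex conjugation $\Lambda$ relating $\Psi_{-}[\rho]$ and $\Phi[\rho]$. The essential observation is that all three channels share the \emph{same} measurement part, namely the heterodyne weight $\langle z|\rho|z\rangle$, and differ only in the repreparation rule $z\mapsto$ (coherent state), the three choices being $|-K\bar{z}\rangle$, $|\bar{K}z\rangle$ and $|-\bar{K}z\rangle$. Since these label families are related by an overall sign and by complex conjugation (note $\overline{-K\bar{z}}=-\bar{K}z$), it suffices to track how $e^{i\pi N}$ and $\Lambda$ act on coherent states and to pull them through the integral, keeping the same fixed input $\rho$ throughout.

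For the first step I would use the gauge action $e^{i\phi N}D(z)e^{-i\phi N}=D(e^{i\phi}z)$ together with $e^{i\pi N}|0\rangle=|0\rangle$ to obtain $e^{i\pi N}|w\rangle=|-w\rangle$ for every $w$. Setting $w=\bar{K}z$ gives $|-\bar{K}z\rangle\langle-\bar{K}z|=e^{i\pi N}\,|\bar{K}z\rangle\langle\bar{K}z|\,e^{-i\pi N}$; since the scalar weight $\langle z|\rho|z\rangle$ is untouched and $e^{i\pi N}$ does not depend on $z$, I may pull it out of the integrals (\ref{PSI+}) defining $\Psi_{\pm}$ to conclude
\begin{equation*}
\Psi_{-}[\rho]=e^{i\pi N}\,\Psi_{+}[\rho]\,e^{-i\pi N},
\end{equation*}
so that these two outputs are unitarily equivalent.

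For the second step I would first record two elementary facts about $\Lambda$: because the number states are fixed by $\Lambda$ and it conjugates coefficients, $\Lambda|w\rangle=|\bar{w}\rangle$; and for any rank-one projector, $\Lambda\,|u\rangle\langle u|\,\Lambda=|\bar{u}\rangle\langle\bar{u}|$, which one checks by acting on an arbitrary vector and invoking antiunitarity. Taking $u=-K\bar{z}$, so that $\bar{u}=-\bar{K}z$, and using that $\langle z|\rho|z\rangle\ge 0$ is real and hence passes unchanged through $\Lambda$, conjugating the integrand of $\Phi[\rho]$ in (\ref{EQ1}) by $\Lambda$ reproduces exactly the integrand of $\Psi_{-}[\rho]$; therefore $\Psi_{-}[\rho]=\Lambda\,\Phi[\rho]\,\Lambda$.

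Finally I would conclude on the spectra. Conjugation by the unitary $e^{i\pi N}$ preserves eigenvalues, and conjugation by the antiunitary $\Lambda$ does as well for a self-adjoint operator: if $\Phi[\rho]\psi=\lambda\psi$ with $\lambda$ real, then $\Lambda\,\Phi[\rho]\,\Lambda(\Lambda\psi)=\lambda(\Lambda\psi)$. Hence $\Phi[\rho]$, $\Psi_{+}[\rho]$ and $\Psi_{-}[\rho]$ have identical eigenvalues, and consequently identical von Neumann entropy. The one point requiring genuine care is the bookkeeping for the antiunitary $\Lambda$---its action on bras, kets and rank-one projectors---together with the remark that for $\rho\in\mathfrak{S}_{2}$ the operator-valued integrals are trace class, so that the manipulations are legitimate. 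I do not anticipate any obstacle beyond this careful accounting.
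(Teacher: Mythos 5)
Your argument is correct and is essentially the paper's own proof: the paper likewise relates $\Psi_{-}[\rho]$ to $\Phi[\rho]$ via the transposition map $\mathrm{T}$ (which is exactly your conjugation by the antiunitary complex conjugation in the number basis, so that $\mathrm{T}[|z\rangle\langle z|]=|\bar z\rangle\langle\bar z|$) and relates $\Psi_{+}[\rho]$ to $\Psi_{-}[\rho]$ by the phase unitary $e^{-i\pi N}$. The only caveat is notational: the paper reserves $\Lambda$ for conjugation on the phase space $\mathbf{Z}$, whereas you use it for the induced antiunitary on $\mathcal{H}$, so you should rename that operator to avoid a clash.
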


\begin{proof}
Noticing that $T[|{z}\rangle \langle {z}|]=|{\bar{z}}\rangle \langle {\bar{z}%
}|$, it follows that $\Psi _{-}[\rho ]=T[\Phi \lbrack \rho ]]$. Therefore $%
\Psi _{-}[\rho ]$ and $\Phi \lbrack \rho ]$ must have the same spectrum. To
prove that also $\Psi _{+}[\rho ]$ shares the same property, notice that we
can transform such state into $\Psi _{-}[\rho ]$ by a phase transformation $%
e^{-i\pi N}$. Indeed, $e^{-i\pi N}|z\rangle =|e^{-i\pi }z\rangle =|-z\rangle
$, hence
\begin{equation*}
e^{-i\pi N}\Psi _{+}[\rho ]e^{i\pi N}=\Psi _{-}[\rho ].
\end{equation*}
\end{proof}

We conclude that given a quantum-limited contravariant channel $\Phi $ and
an input state $\rho $, there exists a unitary transformation $U$ (possibly
dependent upon $\rho $) such that
\begin{equation*}
\Phi \lbrack \rho ]=U\Psi _{+}[\rho ]U^{\ast }.
\end{equation*}%
We have now all the elements we need to prove the minimum output entropy
conjecture. For this purpose we have to take a step back and re-introduce
the complementary counterpart of the contravariant channel.

\section{Proof of the Gaussian optimizer conjecture}

\label{sec:6} In this Section we prove the hypothesis (A) in Proposition~\ref%
{prop4}.

Here $\Phi $ and $\tilde{\Phi}$ are the single-mode quantum-limited
covariant amplifier, resp. contravariant channel defined by the parameter $%
K\geq 1$ via the relations
\begin{eqnarray}
\Phi ^{\ast }[D(z)] &=&D(Kz)\exp \left( -z^{\ast }(K^{2}-1)z/2\right) ,
\label{eq50} \\
\tilde{\Phi}^{\ast }[D(z)] &=&D(G\bar{z})\exp \left( -z^{\ast
}(G^{2}+1)z/2\right) ,
\end{eqnarray}%
where $G=\sqrt{K^{2}-1}.$ In what follows we can assume that $K>1$ (the case
$K=1$ is trivial, corresponding to the identity channel). Accordingly $G $
is strictly positive and we can apply to the contravariant channel $\tilde{%
\Phi}$ all the results we have derived in the previous Sections. From
Proposition~\ref{prop2} we know that $\Phi $ and $\tilde{\Phi}$ are mutually
complementary hence  the density operators $\Phi (|\psi \rangle \langle \psi |)$ and $\tilde{\Phi}%
(|\psi \rangle \langle \psi |)$ have the same nonzero spectrum, hence there exists a
partial isometry $\tilde{V}$ (possibly dependent upon the input state $|\psi \rangle$), mapping the support of one operator onto the support of another, such that
\begin{equation*}\label{ff111}
\Phi \left[ |\psi \rangle \langle \psi |\right] =\tilde{V}\tilde{\Phi}\left[
|\psi \rangle \langle \psi |\right] \tilde{V}^{\ast }.
\end{equation*}
In such a case we will call the partial isometry \textit{connecting} the relevant density operators. Remind that the connected operators have equal entropies.
Furthermore from Proposition~\ref{prop6} it also follows that an analogous
relation connects $\tilde{\Phi}$ and $\Psi _{+}$. Therefore for any $|\psi
\rangle $ there exists a connecting partial isometry $V$ (possibly dependent upon
$ |\psi \rangle $) such that
\begin{equation}
\Phi \left[ |\psi \rangle \langle \psi |\right] =V{\Psi _{+}}\left[ |\psi
\rangle \langle \psi |\right] V^{\ast },  \label{eq51}
\end{equation}
where $\Psi _{+}$ is the channel~(\ref{PSI++}) associated with the
quantum-limited contravariant channel $\tilde{\Phi}$, i.e.
\begin{equation}
\Psi _{+}^{\ast }[D(z)]=D(Gz)\exp \left( -z^{\ast }(G^{2}+1)z/2\right) =D(%
\sqrt{K^{2}-1}\;z)\exp \left( -z^{\ast }K^{2}z/2\right) .
\end{equation}
As already noticed the channel $\Psi _{+}$ is in general not
quantum-limited. Nevertheless, following Proposition \ref{prop1}, we can
express it as a concatenation of a quantum-limited attenuator $\Phi _{1}$
followed by a quantum-limited covariant amplifier $\Phi _{2}$, i.e.$\Psi
_{+}=\Phi _{2}\circ \Phi _{1}.$ The parameters $K_{2}$ and $K_{1}$ which
define these maps can be computed as
\begin{eqnarray}
K_{2} &=&\sqrt{K^{2}/2+\frac{G^{2}+1}{2}}=K,  \label{K2} \\
K_{1} &=&G/K=\frac{\sqrt{K^{2}-1}}{K}. \label{KK1KK}
\end{eqnarray}
From Eq.~(\ref{K2}) it follows that $\Phi _{2}$ is nothing but the channel $%
\Phi $ we started from, hence $\Psi _{+}=\Phi \circ \Phi _{1}.$ Therefore
substituting this into Eq.~(\ref{eq51}) we get
\begin{equation}
\Phi \left[ |\psi \rangle \langle \psi |\right] =V\left( {\Phi \circ \Phi
_{1}}\right) \left[ |\psi \rangle \langle \psi |\right] V^{\ast },
\label{IMPO1}
\end{equation}
which applies for all pure inputs $|\psi \rangle $ (we remind that the
connecting partial isometry $V$ can in principle depend upon $|\psi \rangle $).

It is worth observing that from Eq.~(\ref{IMPO1}) it follows that the
minimal output entropy of the quantum-minimal amplifier $\Phi $ coincides
with the minimal output entropy of $\ \Phi \circ \Phi _{1}$ with $\Phi _{1}$
being the attenuator~(\ref{KK1KK}), a fact which is fully consistent with
the conjecture since $\Phi _{1}$ admits the vacuum as fixed point. We can
however say more.

\begin{proposition}
\label{propo10} Let $\Phi $ the quantum-limited covariant amplifier of Eq.~(%
\ref{eq50}) and $\Phi _{1}$ the attenuator channel associated to it through
Eq.~(\ref{KK1KK}). Then given a pure input state $\rho = |\psi \rangle \langle \psi |$ and integer $%
n $, there exists an ensemble $\mathcal{E}=\{p_{i};|\psi _{i}\rangle \}$ and
connecting partial isometries $U_{i}$ satisfying the relations
\begin{eqnarray}
\Phi _{1}^{n}(\rho ) &=&\sum_{i}p_{i}|\psi _{i}\rangle \langle \psi _{i}|,
\label{IM1} \\
\Phi \lbrack \rho ] &=&\sum_{i}p_{i}U_{i}\Phi \lbrack |\psi _{i}\rangle
\langle \psi _{i}|]U_{i}^{\ast }.  \label{IM2}
\end{eqnarray}
\end{proposition}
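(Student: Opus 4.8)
The plan is to prove the two identities (\ref{IM1}) and (\ref{IM2}) simultaneously by induction on $n$, the only input being the relation (\ref{IMPO1}), i.e. $\Phi[|\psi\rangle\langle\psi|]=V\,(\Phi\circ\Phi_{1})[|\psi\rangle\langle\psi|]\,V^{\ast}$, valid for every pure input with a connecting partial isometry $V$ that may depend on that input. The base case $n=0$ is immediate: take the trivial ensemble $\{1;|\psi\rangle\}$ and $U=I$, so both identities become tautologies. Assume then that at level $n$ there is an ensemble $\{p_{i};|\psi_{i}\rangle\}$ with $\Phi_{1}^{n}(\rho)=\sum_{i}p_{i}|\psi_{i}\rangle\langle\psi_{i}|$ and connecting partial isometries $U_{i}$ realizing $\Phi[\rho]=\sum_{i}p_{i}U_{i}\Phi[|\psi_{i}\rangle\langle\psi_{i}|]U_{i}^{\ast}$.

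For the inductive step I would apply (\ref{IMPO1}) to each member $|\psi_{i}\rangle$, producing a connecting partial isometry $V_{i}$ with $\Phi[|\psi_{i}\rangle\langle\psi_{i}|]=V_{i}\,\Phi[\Phi_{1}[|\psi_{i}\rangle\langle\psi_{i}|]]\,V_{i}^{\ast}$. Since $\Phi_{1}[|\psi_{i}\rangle\langle\psi_{i}|]$ is a density operator it has a discrete spectral decomposition $\Phi_{1}[|\psi_{i}\rangle\langle\psi_{i}|]=\sum_{j}q_{ij}|\phi_{ij}\rangle\langle\phi_{ij}|$, and linearity of $\Phi$ gives $\Phi[|\psi_{i}\rangle\langle\psi_{i}|]=\sum_{j}q_{ij}\,V_{i}\Phi[|\phi_{ij}\rangle\langle\phi_{ij}|]V_{i}^{\ast}$. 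Substituting into the level-$n$ form of (\ref{IM2}) yields
\[
\Phi[\rho]=\sum_{i,j}p_{i}q_{ij}\,(U_{i}V_{i})\,\Phi[|\phi_{ij}\rangle\langle\phi_{ij}|]\,(U_{i}V_{i})^{\ast},
\]
whereas applying $\Phi_{1}$ to the level-$n$ form of (\ref{IM1}) and inserting the same spectral decompositions gives $\Phi_{1}^{\,n+1}(\rho)=\sum_{i,j}p_{i}q_{ij}|\phi_{ij}\rangle\langle\phi_{ij}|$. Relabelling the pair $(i,j)$ by a single index then furnishes the level-$(n+1)$ ensemble, with probabilities $p_{i}q_{ij}$, pure states $|\phi_{ij}\rangle$, and candidate partial isometries $U_{i}V_{i}$.

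The one point that genuinely needs checking — and the step I expect to be the only real obstacle — is that each product $U_{i}V_{i}$ is a bona fide \emph{connecting} partial isometry for $\Phi[|\phi_{ij}\rangle\langle\phi_{ij}|]$, i.e. that it acts isometrically on the support of that operator, so that entropy is preserved and the convex combination in (\ref{IM2}) is legitimate. This follows from a nesting-of-supports argument: because the weights $q_{ij}$ are positive, $\mathrm{supp}\,\Phi[|\phi_{ij}\rangle\langle\phi_{ij}|]\subseteq\mathrm{supp}\,\Phi[\Phi_{1}[|\psi_{i}\rangle\langle\psi_{i}|]]$, on which $V_{i}$ is isometric with image inside $\mathrm{supp}\,\Phi[|\psi_{i}\rangle\langle\psi_{i}|]$, where in turn $U_{i}$ is isometric by the inductive hypothesis; hence $U_{i}V_{i}$ is isometric on $\mathrm{supp}\,\Phi[|\phi_{ij}\rangle\langle\phi_{ij}|]$, and its restriction there is the desired connecting partial isometry. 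A routine ancillary check is that every eigenvector produced stays in $\mathfrak{S}_{2}$, which holds because $\mathrm{Tr}\,N\,\Phi_{1}^{n}(\rho)<\infty$ forces each eigenvector carrying positive weight to have finite energy; the spectral decompositions being discrete, no continuous-ensemble or measurability issue arises.

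Finally I would note why the statement is phrased this way, as it guides the whole approach: identities (\ref{IM1})--(\ref{IM2}), together with concavity of the von Neumann entropy and its invariance under the connecting $U_{i}$, give $S(\Phi[\rho])\ge\sum_{i}p_{i}S(\Phi[|\psi_{i}\rangle\langle\psi_{i}|])$, where the ensemble averages to $\Phi_{1}^{n}(\rho)$; since $K_{1}<1$ the iterated attenuator drives $\Phi_{1}^{n}(\rho)$ to the vacuum, and a lower-semicontinuity argument on the resulting bound as $n\to\infty$ then delivers hypothesis (A). Keeping the ensembles pure and the maps partial-isometric at every stage is precisely what makes this limiting argument available.
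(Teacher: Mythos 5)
Your proposal is correct and follows essentially the same route as the paper: induction on $n$, applying the relation $\Phi[|\psi\rangle\langle\psi|]=V(\Phi\circ\Phi_{1})[|\psi\rangle\langle\psi|]V^{\ast}$ to each eigenvector in the spectral decomposition of $\Phi_{1}[|\psi_{i}\rangle\langle\psi_{i}|]$ and composing the partial isometries as $U_{i}V_{i}$. The only addition is your explicit nesting-of-supports check that $U_{i}V_{i}$ is genuinely connecting, which the paper leaves implicit but which is a correct and worthwhile clarification.
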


\begin{proof}
We prove this by induction. Let
\begin{equation*}
\Phi _{1}(|\psi \rangle \langle \psi |)=\sum_{j}p_{j}|\psi _{j}\rangle
\langle \psi _{j}|\;
\end{equation*}%
be the spectral decomposition of the state $\Phi _{1}(|\psi \rangle \langle
\psi |)$, where $p_{j}$ are strictly positive (zero eigenvalues can be
omitted as they do not contribute to the sum). Inserting this into~(\ref%
{IMPO1}) we get
\begin{equation*}
\Phi (|\psi \rangle \langle \psi |)=\sum_{j}p_{j}V\Phi (|\psi _{j}\rangle
\langle \psi _{j}|)V^{\ast },
\end{equation*}%
proving the statement for $n=1.$

Assume now that the statement is valid for some $n.$ Then from (\ref{IM1})%
\begin{eqnarray*}
\Phi _{1}^{n+1}(|\psi \rangle \langle \psi |) &=&\sum_{i}p_{i}\Phi _{1}\left[
|\psi _{i}\rangle \langle \psi _{i}|\right] \\
&=&\;\sum_{i,j}p_{i}p_{j|i}|\psi _{j|i}\rangle \langle \psi _{j|i}|,
\end{eqnarray*}
where $\Phi _{1}\left[ |\psi _{i}\rangle \langle \psi _{i}|\right]
=\sum_{,j}p_{j|i}|\psi _{j|i}\rangle \langle \psi _{j|i}|$ is the spectral
decomposition. By using (\ref{IM2}), (\ref{IMPO1}) we obtain%
\begin{eqnarray*}
\Phi (|\psi \rangle \langle \psi |) &=&\sum_{i}p_{i}U_{i}(\Phi \lbrack |\psi
_{i}\rangle \langle \psi _{i}|])U_{i}^{\ast } \\
&=&\sum_{i}p_{i}U_{i}V_{i}\left( \Phi \circ \Phi _{1}\right) ([|\psi
_{i}\rangle \langle \psi _{i}|])\left[ U_{i}V_{i}\right] ^{\ast }\; \\
&=&\sum_{i,j}p_{j}p_{j|i}\left[ U_{i}V_{i}\right] \Phi (|\psi _{j|i}\rangle
\langle \psi _{j|i}|)\left[ U_{i}V_{i}\right] ^{\ast },
\end{eqnarray*}
proving the statement for $n+1.$
\end{proof}

\bigskip We can then use the concavity of the von Neumann entropy to write
the inequality
\begin{equation}
S(\Phi \lbrack \rho ])\geq \sum_{i}p_{i}S(\Phi (|\psi _{i}\rangle \langle
\psi _{i}|)),  \label{IMPO1n1}
\end{equation}%
where the ensemble satisfies (\ref{IM1}). Notice also for $n\rightarrow
\infty $ the channel $\Phi _{1}^{n}$ brings all the states into the fixed
point, i.e. the vacuum state,
\begin{equation*}
\lim_{n\rightarrow \infty }\Phi _{1}^{n}(|\psi \rangle \langle \psi
|)=|0\rangle \langle 0|,
\end{equation*}%
the convergence being in trace-norm. Accordingly as $n\rightarrow \infty $
the only state surviving in the decomposition (\ref{IM1}) is the vacuum
state. It seems then reasonable to conclude that in the limit $n\rightarrow
\infty $ the right-hand side of Eq.~(\ref{IMPO1n1}) should reduce to $S(\Phi
(|\psi \rangle \langle \psi |))\geq S(\Phi (|0\rangle \langle 0|))$ hence
proving the thesis. To make this precise we use the monotonicity of the
relative entropy~\cite{h}, i.e.
\begin{equation*}
S(|\psi _{i}\rangle \langle \psi _{i}|||\sigma )\geq S(\Phi (|\psi
_{i}\rangle \langle \psi _{i}|)||\Phi (\sigma ))
\end{equation*}%
where $|\psi _{i}\rangle $ is one of the vectors of the ensemble for $\Phi
_{1}^{n}$, and $\sigma $ a state to be defined later. By reorganizing
various terms this gives
\begin{equation*}
S(\Phi (|\psi _{i}\rangle \langle \psi _{i}|))\geq -\mbox{Tr}\Phi (|\psi
_{i}\rangle \langle \psi _{i}|)\log \Phi (\sigma )+\mbox{Tr}|\psi
_{i}\rangle \langle \psi _{i}|\log \sigma ,
\end{equation*}%
which, substituted into (\ref{IMPO1n1}), yields
\begin{eqnarray}
S(\Phi (|\psi \rangle \langle \psi |)) &\geq &-\mbox{Tr}\Phi
(\sum_{i}p_{i}|\psi _{i}\rangle \langle \psi _{i}|)\log \Phi (\sigma )+%
\mbox{Tr}\sum_{i}p_{i}|\psi _{i}\rangle \langle \psi _{i}|\log \sigma
\notag \\
&=&-\mbox{Tr}\left( \Phi \circ \Phi _{1}^{n}\right) (|\psi \rangle \langle
\psi |)\log \Phi (\sigma )]+\mbox{Tr}\Phi _{1}^{n}(|\psi \rangle \langle
\psi |)\log \sigma   \notag \\
&=&S(\left( \Phi \circ \Phi _{1}^{n}\right) (|\psi \rangle \langle \psi
|))+S(\left( \Phi \circ \Phi _{1}^{n}\right) (|\psi \rangle \langle \psi
|)||\Phi (\sigma ))  \notag \\
&&+\mbox{Tr}\Phi _{1}^{n}(|\psi \rangle \langle \psi |)\log \sigma   \notag
\\
&\geq &S(\left( \Phi \circ \Phi _{1}^{n}\right) (|\psi \rangle \langle \psi
|))+\mbox{Tr}\Phi _{1}^{n}(|\psi \rangle \langle \psi |)\log \sigma .
\label{FFF}
\end{eqnarray}%

Assume next that $\sigma $ is a Gibbs state, i.e.
\begin{equation*}
\sigma =(1-\gamma )\sum_{k=0}^{\infty }\gamma ^{k}|k\rangle \langle k|,
\end{equation*}%
with $\gamma >0$. With this choice the second term of Eq.~(\ref{FFF}) can be
well defined for all input states $|\psi \rangle \langle \psi |$ having
finite second moments. Indeed by repeated use of the relation%
\begin{equation*}
\mbox{Tr}\Phi _{1}(\rho )a^{\dag }a=\left[ \frac{K^{2}-1}{K^2}\right] \mbox{Tr}%
\rho a^{\dag }a
\end{equation*}%
valid for states from $\mathfrak{S}_{2}$, we have
\begin{equation*}
\mbox{Tr}[\Phi _{1}^{n}(|\psi \rangle \langle \psi |)\log \sigma ]=\log
(1-\gamma )+\log \gamma \;\left[ \frac{K^{2}-1}{K^2}\right] ^{n}\langle \psi
|a^{\dag }a|\psi \rangle \;.
\end{equation*}%
Substituting this into the right-hand-side of Eq.~(\ref{FFF}) gives
\begin{equation*}
S(\Phi (|\psi \rangle \langle \psi |))\geq S(\left( \Phi \circ \Phi
_{1}^{n}\right) (|\psi \rangle \langle \psi |))+\log (1-\gamma )+\log \gamma
\;\left[ \frac{K^{2}-1}{K^2}\right] ^{n}\langle \psi |a^{\dag }a|\psi \rangle
\;.
\end{equation*}%
Taking the limit $n\rightarrow \infty $ and using lower semicontinuity of
the quantum entropy in the first term we obtain
\begin{equation*}
S(\Phi (|\psi \rangle \langle \psi |))\geq S(\Phi (|0\rangle \langle
0|))+\log (1-\gamma ).
\end{equation*}%
Taking the limit $\gamma \rightarrow 0$ we finally have%
\begin{equation*}
S(\Phi (|\psi \rangle \langle \psi |))\geq S(\Phi (|0\rangle \langle 0|))
\end{equation*}%
for all input states $|\psi \rangle \langle \psi |$ with finite second
moments.

\section{Implications and perspectives}

\label{sec:conc}

In this work we have proven that the minimal entropy at the output of a
(possibly multimode) BGC covariant (or contravariant) channel $\Phi $ is
achieved by the vacuum input state, restricting our analysis \ to the class
of state with finite second moments. As detailed in Sec.~\ref{sec:reduction}
this implies both the additivity of the minimal output entropy functional
and of the classical capacity (under energy constraint) whose value is
achieved via Gaussian encodings. To be specific, let us apply the formulas (\ref{entro}), (\ref{capa})
to single-mode ($s=1$) quantum channels~\cite{cgh} to obtain the quantum
counterparts of the Shannon formula (\ref{shan}). In this case one
identifies three classes of covariant maps:

\begin{itemize}
\item The \textit{thermal noise channels} describing a passive exchange
(beam splitter) interaction with an external Gibbs thermal state. Following
the notation of~\cite{conj1} they are characterized by two real parameters $%
\eta \in \lbrack 0,1]$ and $N\in \lbrack 0,\infty \lbrack $, associated
respectively to the intensity of exchange coupling and to the temperature of
the system environment, and which enter into Eq.~(\ref{defprima}) as $K=%
\sqrt{\eta }$ and $\mu =(1-\eta )(N+1/2)$. For these channels our result
shows that
\begin{eqnarray*}
\min_{\rho }S(\Phi ^{\otimes n}[\rho ]) &=&n\;g((1-\eta )N),\qquad \qquad \\
C(\Phi ;H,E) &=&C_{\chi }(\Phi ;H,E) \\
&=&g(\eta E+(1-\eta )N)-g((1-\eta )N).
\end{eqnarray*}

\item The \textit{additive classical noise channels} which randomly displace
the input states in phase space. They can be fully characterized via a
single parameter $N\in \lbrack 0,\infty \lbrack $ which represents the
variance of the Gaussian probability distribution governing the displacement
transformation, and which enters into Eq.~(\ref{defprima}) via the
identities $K=1$, $\mu =N$. In this case we have
\begin{eqnarray*}
\min_{\rho }S(\Phi ^{\otimes n}[\rho ]) &=&n\;g(N),\qquad \qquad \\
C(\Phi ;H,E) &=&C_{\chi }(\Phi ;H,E)=g(E+N)-g(N).
\end{eqnarray*}

\item The \textit{noisy amplifier channels} characterized by two real
parameters $\kappa \in \lbrack 1,\infty \lbrack $ and $N\in \lbrack 0,\infty
\lbrack $ entering Eq.~(\ref{defprima}) via the identities $K=\sqrt{\kappa }$
and $\mu =(\kappa -1)(N+1/2)$. In this case we have
\begin{eqnarray*}
\min_{\rho }S(\Phi ^{\otimes n}[\rho ]) &=&n\;g((\kappa -1)(N+1)),\qquad
\qquad \\
C(\Phi ;H,E) &=&C_{\chi }(\Phi ;H,E) \\
&=&g(\kappa E+(\kappa -1)(N+1))-g((\kappa -1)(N+1)).
\end{eqnarray*}
\end{itemize}

In the single mode scenario, all BGCs with nondegenerate $K$ are unitarily
equivalent to covariant (or contravariant) channels via metaplectic unitary
transformations, while for degenerate channels the proof is even simpler
\cite{conj2}. Thus one can easily verify that the optimal states which
minimize the output entropy are squeezed vacuum and the corresponding
coherent states, a result which in certain regimes allows also to compute
the constrained classical capacity. Most importantly,
with Proposition~\ref{propo10} it is possible to prove~\cite{ANDREA} that,
for arbitrary covariant (contravariant) channels the vacuum input produces the output which majorizes
all other output states (a result which implies the minimal output
conjecture).

In the multimode case, the argument of this paper concerning the minimal
output entropy applies to BGCs gauge-covariant with respect to any
\textquotedblleft squeezed\textquotedblright\ complex structure in the
underlying real symplectic space, since it can be always reduced to the
standard one considered in this paper. However the problem with the
classical capacity arises already in the case (perhaps artificial from the
point of view of applications) where the complex structures associated with
the gauge-covariant BGC and the energy operator do not agree. \newline

\textit{Acknowledgements}: The authors are grateful to R. F. Werner, J.
Oppenheim, A. Winter, A. Mari, L. Ambrosio, and M. E. Shirokov for comments and discussions. They also acknowledge
support and catalysing role of the Isaac Newton Institute for Mathematical
Sciences, Cambridge, UK: important part of this work was conducted when
attending the Newton Institute programme \textit{Mathematical Challenges in
Quantum Information}. AH acknowledges the Rothschild Distinguished Visiting Fellowship which enabled him to participate
in the programme and partial support from RAS Fundamental
Research Programs, Russian Quantum Center and RFBR grant No 12-01-00319.
RG-P acknowledge financial support from the F.R.S.-FNRS and
 from the Alexander von Humboldt Foundation.

\end{document}